\newtheorem{theorem}{Theorem}
\newtheorem{lemma}{Lemma}
\def\-{\mbox{-}}
\def\*{\star}
\begin{document}

\conferenceinfo{}{}

\title{New Results for Adaptive and Approximate Counting of Inversions}
\numberofauthors{1}

\author
{
\alignauthor
Saladi Rahul \\[2mm]
\affaddr{Dept. of Computer Science and Engg., University of Minnesota Twin-Cities,\\
  4-192 Keller Hall, 200 Union St. S.E., Minneapolis, MN 55455, USA}
 \email{sala0198@umn.edu} 
}

\maketitle

\begin{abstract}
Counting inversions is a classic and important problem in databases.
The number of inversions, $K^*$,  in a list $L=(L(1),L(2),\ldots,L(n))$ is defined as the number of pairs $i < j$ with $L(i) > L(j)$. 
In this paper, new results for this problem are presented:

\begin{enumerate}
 \item In the I/O-model, an {\it adaptive} algorithm is presented for calculating $K^{*}$. 
 The algorithm performs $O(\frac{N}{B}+ \frac{N}{B}\log_{M/B}(\frac{K^*}{NB}))$ I/Os. When $K^{*}=O(NM)$, then 
 the algorithm takes only $O(\frac{N}{B})$ I/Os. This algorithm can be modified to match the 
 state of the art for the comparison based model and the RAM model.

 \item In the RAM model, a linear-time algorithm is presented to obtain a tight estimate of $K^*$; 
 specifically a value  which lies with 
 high probability in the range  \\ $[(1-\frac{\log N}{N^{1/4}})K^*,(1+\frac{\log N}{N^{1/4}})K^*]$. The state of the art linear-time 
 algorithm works for the {\em special} case where $L$ is a permutation, i.e., 
 each $L(i)$ is a distinct integer in the range $[1,N]$.
 In this paper, we handle a {\em general} case
 where each $L(i)$ is a real number. 
  
 \end{enumerate}
\end{abstract}

\section{Introduction}

In this paper we revisit the classic database problem of counting inversions. The number of 
inversions, $K^*$, in a list $L=(L(1), L(2),\ldots, L(N))$ is defined as the number 
of pairs $i < j$ with $L(i) > L(j)$. Each value $L(i)$ is a real number.

\subsection{Motivation}

\noindent
{\em Classical motivation.} Interest in studying the counting inversions problem has been shown by various communities in computer science. 
It is considered an important measure to test the ``sortedness'' of the data. 
For example, sorting data is a critical operation in large-scale applications. Typically, such applications 
have multiple sorting algorithms and they perform some ``tests'' on the data to decide the most suitable 
sorting algorithm (an insertion-sort type algorithm is fast if the data is almost sorted). One of the important test happens to be counting inversions . We refer the reader 
to the book of Knuth \cite{knuth3} and the 
survey report of Estivill-Castro and Wood \cite{ew92} for a detailed discussion on how counting inversions is crucial to 
the engineering of a fast sorting algorithm.

\vspace{0.1 in}
\noindent
{\em Modern motivation.} Modern applications have revised the interest in the problem of counting inversions.
We  briefly mention the applications here:  (a) The number of inversions between two permutations is important for {\em rank aggregation} in Internet-based applications \cite{dkns01}, 
and (b) The robustness of a {\em ranking function} (of database entries) can be tested via counting inversions.
We strongly refer the reader to Ajtai {\em et al.} \cite{ajks02} for a nice detailed description of how modern applications benefit from counting inversions.

\subsection{Previous work on counting inversions}

{\em Non-adaptive algorithms.} The standard textbook solution for the counting inversions problem takes $O(n\log n)$ time by mergesort. 
. There have been improvements over the $O(n\log n)$ time algorithm in the RAM model.  
Using Dietz's dynamic ranking structure \cite{d89}  counting inversions can be done in $O(n\log n/\log \log n)$ time. Few years back, Chan and 
Patrsacu \cite{cp10} could significantly improve the running time to $O(n\sqrt{\log n})$. Interestingly, in the RAM model counting inversions 
seems to be harder than sorting: the best known deterministic sorting algorithm takes $O(N\log\log N)$ time~\cite{h04} 
and the best known randomized sorting algorithm takes $O(N\sqrt{\log\log N})$ expected time~\cite{ht02}.

{\em Adaptive algorithms.} One  approach to develop faster algorithms is to build solutions which adapt based on the number 
of inversions. Mehlhorn~\cite{m79} presented an $O\left(N + N\log\left(\frac{K^{*}}{N}\right)\right)$ time algorithm to count inversions 
in the comparison based model. Adapting the approach of Pagh, Pagh, and Thorup~\cite{ppt04}, Elmasry~\cite{e15} presented an 
$O\left(N + N\sqrt{\log\left(\frac{K^{*}}{N}\right)}\right)$ time algorithm in the RAM model.

{\em Approximate algorithms.} The other approach to develop faster algorithms to count inversions is to approximate the value of $K^{*}$. 
To obtain faster algorithms, Andersson and Petersson \cite{ap98}, and Chan and Patrascu \cite{cp10} studied the approximate version of counting inversions problem. 
If the number of inversion in the list is $K^{*}$, then their algorithm will report a value within an additive error of $\varepsilon K^{*}$. 

{\em Streaming setting.} The focus of this paper is the RAM model and the I/O-model. 
However, there are other interesting models in which this problem has been studied.
For example, the last decade saw the streaming community getting interested \cite{ajks02,gz03}.

\subsection{Our Results} 
In this paper, we present two new results on the problem of counting inversions.

\vspace{0.1 in}
\noindent
{\em Adaptive algorithm.} In the I/O-model we present an adaptive algorithm which counts the 
number of inversions  using $O(\frac{N}{B} + \frac{N}{B}\log_{M/B}(\frac{K^{*}}{NB}))$ I/Os.
Previously, such adaptive algorithms were known only in the comparison model~\cite{m79} and the RAM model~\cite{e15}. 
Neither of these solution can be trivially modified to work efficiently in the I/O-model. 
For example, adapting the algorithm of \cite{e15} to the I/O-model requires $\Omega(N)$ I/Os, since it inserts one 
point at a time. 
Interestingly, our algorithm can be modified to match the 
 state of the art for the comparison based model and the RAM model.
 In that sense, our algorithm subsumes the results of \cite{e15,m79}.
 Please see the appendix for a brief description of the I/O-model.

\vspace{0.1 in}
\noindent
{\em Approximate algorithm.} This problem is studied in the RAM model. 
We present an $O(N)$ time algorithm which reports a value in the range \\
$\left[ \left(1-\frac{\log N}{N^{1/4}}\right)K^{*},\left(1+ \frac{\log N}{N^{1/4}}\right)K^{*}\right]$. 
The estimate is correct with probability $1-1/N^c$, where $c$ is a constant independent of $N$.

Chan and Patrascu \cite{cp10} also presented an $O(N)$ time algorithm for this problem. 
However, their solution works only for the special case where $L$ is a permutation, i.e., 
each $L(i)$ is a distinct integer in the range $[1,N]$. Because they consider a permutation, 
they  make use of the Spearman's Footrule~\cite{dg77} which already gives a $2$-factor approximation of 
$K^{*}$. In this paper, we study the more challenging setting where each element in $L(i)$ is a 
real number. A new approach is needed to handle this setting.

\section{Red-blue dominance counting}

We start by defining the {\em red/blue dominance counting problem}. 
We are given a red list $R=(R(1), R(2),\ldots, R(n))$ and a blue list $B=(B(1), B(2),\ldots, B(n))$.
Each element in $R$ is mapped to a two-dimensional point: $R(i)$ is mapped to a point 
$(i,R(i))$. Similarly, each element, say $B(i)$, in $B$ mapped to a point $(i, B(i))$. 
A pair $(r,b)$ is called an {\em domination pair} if $r$ is a red point dominated by a blue point $b$. 
A blue point $b$ dominates a red point $r$ if $b$ has a larger  $x$-coordinate than $r$ and 
$b$ has a  smaller $y$-coordinate than $r$ (see Figure~\ref{fig:main-figure}(a)). 

Throughout the paper, we will interpret $R$ and $B$ as one of the following: 
(1) a list of $N$ elements storing real-values, or (2) a pointset in two-dimensional plane. 
It will be clear from the context which interpretation is being taken.

Let $K^{*}$ be the number of domination pairs in $R$ and $B$. 
Counting inversions is a special case of this problem by letting the red point set be equal to the blue point set.
In this paper, we present two results for the  red-blue dominance counting problem.
\begin{theorem}\label{thm:main-1}
{\em (Adaptive algorithm)} Red-blue dominance counting problem can be solved using \\ $O(\frac{N}{B} + \frac{N}{B}\log_{M/B}(\frac{K^{*}}{NB}))$ I/Os, 
where $K^{*}$ is the number of domination pairs. 
When $K^{*}=O(NM)$, then the algorithm uses only $O(\frac{N}{B})$ I/Os.
This problem is studied in the $I/O$-model.
\end{theorem}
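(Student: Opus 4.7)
The plan is to use an external-memory merge-sort strategy that counts red-blue dominations during the merges, combined with an adaptive stopping rule.

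First, I would partition the $2N$ points into chunks of $M$ consecutive points in $x$-order. Each chunk fits in main memory; I sort it by $y$-coordinate, count the red-blue dominations that lie within it with a linear sweep, and write the sorted chunk back. This base pass costs $O(N/B)$ I/Os, which accounts for the additive $O(N/B)$ term in the bound.

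Next, I would iterate merge passes of the standard external-memory type: at pass $\ell$, merge $M/B$ consecutive level-$(\ell-1)$ chunks into one level-$\ell$ chunk using an $(M/B)$-way merge. During the merge, count the red-blue dominations that cross source-chunks by keeping one blue counter per source chunk in internal memory, updating that counter each time a blue is output, and, whenever a red from source $i$ is emitted, adding to the running total the sum of counters for sources $i+1,\ldots,M/B$. A Fenwick tree makes the sum-and-update cost $O(\log(M/B))$ per point and is purely internal. Each such pass costs $O(N/B)$ I/Os, and after $\log_{M/B}(N/M)$ passes the whole data set is a single sorted chunk and all dominations have been counted.

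The adaptive step is where the main work lies. After pass $\ell$ I would insert a cheap test of whether any cross-chunk dominations remain: for each level-$\ell$ chunk, record the maximum $y$-coordinate among its red points, and maintain, right-to-left, the running minimum $y$-coordinate among blues lying in subsequent chunks. A remaining domination forces some chunk to have its red-max exceed the blue-min to its right; if no chunk does, the algorithm terminates with the exact count.

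The obstacle, and the main place where real work remains, is relating the number of merge passes performed to $K^{*}$. A single long-span inversion can prevent termination for many passes, so the naive schedule does not yield the $\log_{M/B}(K^{*}/(NB))$ bound directly. My plan is to strengthen the boundary test so it not only decides termination but also isolates the ``outlier'' chunks where long-range dominations live, and to count those dominations by a targeted scan rather than by folding them into further merges. The analysis then charges each merge pass either to the $\Omega(NB\cdot(M/B)^{\ell-1})$ new dominations it uncovers, which pays for the $O(N/B)$ cost against the budget $K^{*}$, or to the termination test; summing the charges against $K^{*}$ yields the claimed $\log_{M/B}(K^{*}/(NB))$ factor, and in particular only a single merge pass is needed when $K^{*}=O(NM)$.
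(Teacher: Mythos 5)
The core of your plan---merge passes that each cost $O(N/B)$ I/Os, with an adaptive stopping rule---founders on exactly the obstacle you name, and the patch you sketch does not close it. The charging argument requires that a pass at level $\ell$ which fails the termination test uncovers $\Omega(NB\cdot(M/B)^{\ell-1})$ new dominations, but failing the test only certifies that \emph{at least one} cross-chunk domination remains. A single pair $(r,b)$ with $L(r)>L(b)$ at opposite ends of the list forces every one of the $\log_{M/B}(N/M)$ passes while contributing $K^{*}=1$; the same example already breaks your claim that one merge pass suffices when $K^{*}=O(NM)$, since after one pass that pair still spans many chunks. Everything therefore hinges on the ``strengthened boundary test'' that isolates outlier chunks and the ``targeted scan'' that counts their long-range dominations within the $O(N/B)$ budget---but these are not specified, and they are precisely the hard part: you must identify, in $O(N/B)$ I/Os, which points participate in few dominations (so they can be charged and retired) versus many (so there cannot be more than $N/2$ of them if $K^{*}$ is small), without knowing $K^{*}$ in advance.

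The paper resolves exactly this difficulty with different machinery. It builds a $\lceil 2K/N\rceil$-shallow cutting on the red points, classifies each blue point as \emph{shallow} (dominating $O(K/N)$ reds, handled locally inside a cell) or \emph{deep} (dominating $\geq 2K/N$ reds, of which there can be at most $N/2$ if $K^{*}\leq K$, else a \emph{failure} is declared); it then repeats with colors swapped and recurses on the deep halves, yielding a $K$-capped algorithm in $O(\frac{N}{B}+\frac{N}{B}\log_{M/B}\frac{K}{NB})$ I/Os. The unknown $K^{*}$ is then handled by Chan's guessing trick with $K_i=(NB)(M/B)^{2^i}$, so the geometrically growing round costs telescope to the last round. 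If you want to salvage a merge-based route, you would need an analogue of the deep/shallow classification with a failure guarantee; without it there is no way to bound the number of passes by a function of $K^{*}$.
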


\begin{theorem}\label{thm:approx}
{\em (Approximate algorithm)} Red-blue approximate dominance counting problem can be solved in $O(N)$ time.
For a fixed constant $c$, with probability $1-1/N^c$ the algorithm will report a value in the range \\ 
$\left[ \left(1-\frac{\log N}{N^{1/4}}\right)K^{*},\left(1+ \frac{\log N}{N^{1/4}}\right)K^{*}\right]$. 
This problem is studied in the RAM-model.
\end{theorem}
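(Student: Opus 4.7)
My plan is to approximate $K^{*}$ by uniform sampling of index pairs, using an exact adaptive counter as a fallback for the small-$K^{*}$ regime, with a linear-time coarse estimator deciding between the two branches.

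\textbf{Main estimator.} I would draw $m = \Theta(N)$ independent uniform index pairs $(i,j)$ with $i<j$, test each in $O(1)$ time for being a dominance pair (i.e., check whether $R(i) > B(j)$), and output $\hat K = Y\cdot N(N-1)/(2m)$, where $Y$ is the number of dominance pairs in the sample. With $\varepsilon = \log N / N^{1/4}$, a Chernoff bound on the binomial $Y$ gives
\[
  \Pr\bigl[\,|\hat K - K^{*}| > \varepsilon K^{*}\,\bigr]
  \;\le\; 2\exp\!\bigl(-\Omega(\varepsilon^{2}\,\expt[Y])\bigr).
\]
Since $\expt[Y] \approx 2 m K^{*}/N^{2}$, plugging in $\varepsilon^{2} = \log^{2} N / N^{1/2}$ shows that the failure probability drops below $1/N^{c}$ whenever $K^{*} = \Omega(N^{3/2}/\log N)$. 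The entire sampling phase is clearly $O(N)$ time.

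\textbf{Coarse estimate for regime selection.} The real-valued setting precludes a direct appeal to Spearman's Footrule as in~\cite{cp10}, so I would first produce a crude value $\tilde K$ in $O(N)$ time as follows: use linear-time selection to compute $O(1)$ approximate $y$-quantiles of the combined point set, bucket the $2N$ points into $O(1)$ horizontal strata, and count the \emph{cross-stratum} dominance pairs exactly by maintaining a constant-size array of counters during a single left-to-right scan over the indices. The within-stratum pairs are only a constant fraction of the $O(N^{2})$ total pairs, so once $K^{*}$ is sufficiently large the cross-stratum count is a constant-factor lower bound for $K^{*}$, and $\tilde K$ determines the regime.

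\textbf{Small-$K^{*}$ branch.} If $\tilde K$ is below the sampling threshold, I would run an exact adaptive red/blue dominance counter analogous to Elmasry's~\cite{e15} (or a RAM specialization of Theorem~\ref{thm:main-1}), which takes $O(N + N\sqrt{\log(K^{*}/N)})$ time and so is $O(N)$ whenever $K^{*} = O(N)$. For the intermediate range $K^{*} \in (N, N^{3/2}/\log N)$, I would apply the adaptive counter \emph{stratum by stratum} on the coarse partition, arguing that the per-stratum $\sqrt{\log(\cdot)}$ factors amortize across the constantly many strata to yield an aggregate $O(N)$ bound.

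\textbf{Main obstacle.} The technically delicate step is exactly the intermediate regime $K^{*} \in (\Theta(N),\,\Theta(N^{3/2}/\log N))$: here naive pair-sampling is too noisy to meet the relative-error target in $O(N)$ samples, while the adaptive exact counter is no longer linear. Showing that the stratified combination of the two estimators simultaneously achieves the $O(N)$ runtime and the $(\log N/N^{1/4})$ relative-error guarantee is the crux of the argument; this is precisely the point at which the real-valued case departs from the permutation-only algorithm of Chan and Patrascu~\cite{cp10}, and I expect most of the effort in the full proof to go into this stratified variance calculation.
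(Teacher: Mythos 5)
Your overall three-regime skeleton is the same as the paper's (exact adaptive counting for small $K^{*}$, uniform pair sampling for large $K^{*}$, something special in between), and your uniform-sampling analysis for the large regime and your use of a capped/adaptive exact counter for $K^{*}=O(N)$ both match what the paper does. But there is a genuine gap exactly where you say the crux is: the intermediate regime $K^{*}\in(\Theta(N),\Theta(N^{3/2}/\log N))$. Your proposed fix --- $O(1)$ horizontal strata from approximate $y$-quantiles, plus per-stratum adaptive counting --- does not work, for two concrete reasons. First, the regime-selection step is unsound: the $K^{*}$ dominance pairs need not be spread uniformly over all $\binom{N}{2}$ index pairs, so they can lie entirely \emph{within} a single stratum (e.g., a list sorted except inside one narrow $y$-band); the cross-stratum count can then be $0$ while $K^{*}=\Theta(N^{2})$, so $\tilde K$ is not a constant-factor lower bound and cannot select the regime. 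Second, with only $O(1)$ strata each stratum still holds $\Theta(N)$ points and can hold up to $\Theta(N^{2})$ internal dominance pairs, so running an $O(N+N\sqrt{\log(K^{*}/N)})$ adaptive counter per stratum costs $\omega(N)$ whenever some stratum has $N^{1+\Omega(1)}$ internal pairs; summing over a constant number of strata does not amortize this away.

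The idea you are missing is the paper's use of red-blue cells built from $\lceil 2K/N\rceil$-shallow cuttings with $K=N\sqrt{N}\log N$. These cells satisfy: (A) $\min\{|R_i|,|B_i|\}=O(K/N)$, (B) every dominance pair lies in exactly one cell, and (C) $\sum_i|R_i|=O(N)$ and $\sum_i|B_i|=O(N)$. Consequently the universe of candidate pairs shrinks from $N^{2}$ to $\sum_i|R_i||B_i|=O(K/N)\cdot O(N)=O(N\sqrt{N}\log N)$ while still containing all dominance pairs. Sampling a cell with probability proportional to $|R_i||B_i|$ and then a uniform pair inside it yields a uniform sample over this reduced universe, so with $N$ samples the expected number of hits is at least $K^{*}/(C\sqrt{N}\log N)\ge\sqrt{N}/(C\log N)$ once $K^{*}\ge N$, and $\varepsilon^{2}\expt[X]=\Omega(\log N)$ for $\varepsilon=\log N/N^{1/4}$, which is exactly the concentration your uniform sampler cannot achieve in this range. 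This importance-sampling-via-cells step is the real content of the intermediate case, and it also removes the need for your coarse estimator: the capped constructions themselves either succeed or certify that $K^{*}$ exceeds the current threshold, so the three stages can simply be cascaded.
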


\begin{figure*}[t]
 \centering
\includegraphics[scale=1]{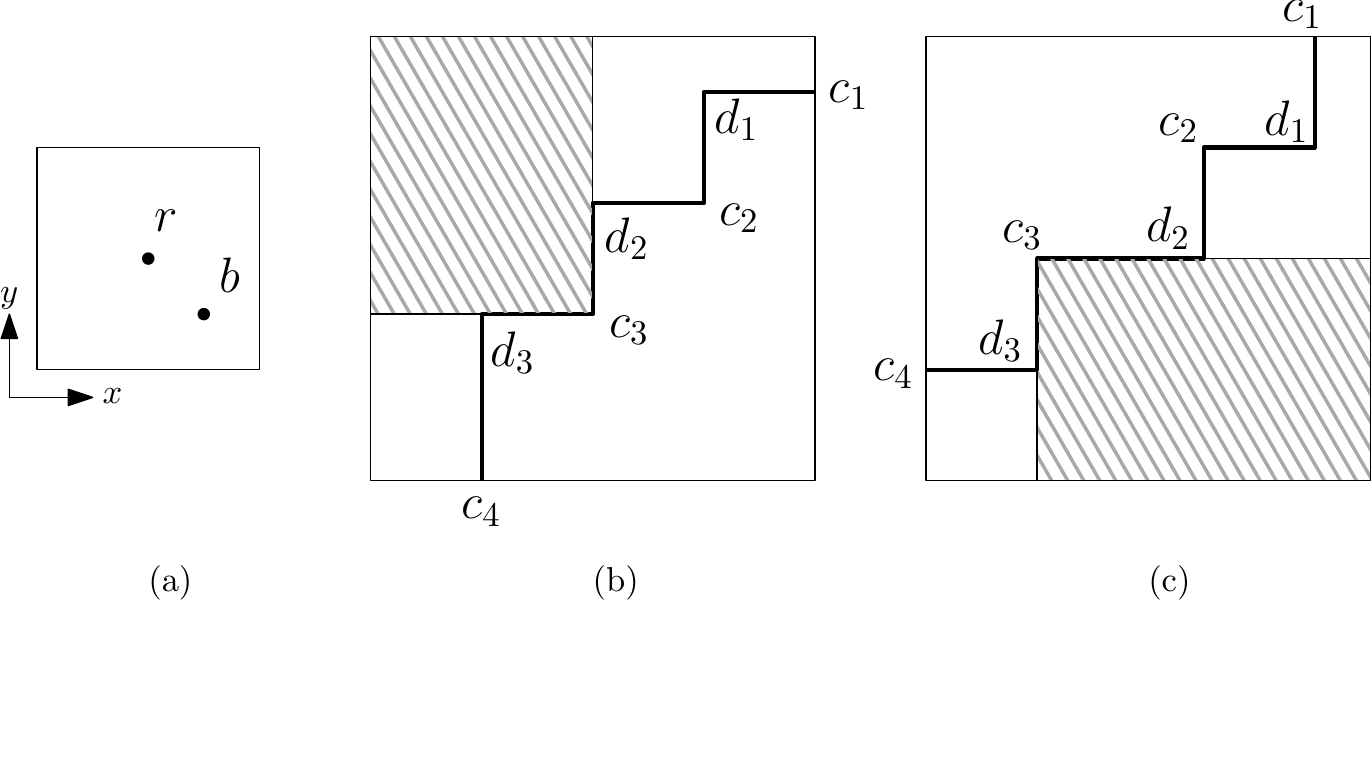}
\vspace{-0.5 in}
\caption{}
\label{fig:main-figure}
\end{figure*}

In Section~\ref{sec:red-blue} we will define the concept of red-blue cells along with their 
properties. At first look, it might not be clear to the reader as to why we need red-blue cells. 
Then in Section~\ref{sec:adaptive} we will make use of them to obtain the adaptive algorithm 
and then in Section~\ref{sec:approx} we will use them along with random sampling techniques to 
obtain the approximate algorithm.
\section{Construction of Red-Blue Cells}\label{sec:red-blue}
Given the lists $R$ and  $B$, and a parameter $K$, we want to construct 
a set of {\em red-blue cells} $C_1, C_2,\ldots,C_\ell$. 
 A red-cell is a rectangle of the form 
$(-\infty,x) \times (y,\infty)$, and a blue-cell of the form $(x,\infty) \times (-\infty,y)$.
With each cell $C_i$ we associate a set of red points $R_i \subseteq R$ and 
a set of blue points $B_i \subseteq B$. 
Consider the following two cases:

\vspace{0.1 in}
\noindent
(1) If $K^{*} \leq K$, then we want to construct 
red-blue cells which satisfy the following {\em three} properties:
\begin{enumerate}[label=\Alph*)]
\item $\forall i \in [1,\ell], \min\{|R_i|,|B_i|\}=O(K/N)$.
\item For every domination pair $(r,b)$ there will exist exactly a single integer $i$  such that 
$b\in B_i$ and $r \in R_i$.
\item $\sum_{i=1}^{\ell} |R_i|=O(N)$, and $\sum_{i=1}^{\ell} |B_i|=O(N)$.
\end{enumerate}
\vspace{0.1 in}
\noindent
(2) If $K^{*} >K$, then we either construct the cells with the properties described 
above, or we are allowed to report a {\em failure}.

\begin{lemma}\label{lem:cells}
The red-blue cells  can be constructed in $O(N/B)$ I/Os.
\end{lemma}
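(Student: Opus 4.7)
The plan is to build the cells in a single external-memory scan whose grain $s = \lceil K/N \rceil$ matches the target cell-size bound. First I partition the index range $[1,N]$ into $\lceil N/s \rceil$ contiguous strips of width $s$; since $R$ and $B$ are stored in index order, the strips stream into internal memory in order and the pass costs $O(N/B)$ I/Os. In the regime $K \le NM$ (where the tight bound is claimed), each strip fits in internal memory and can be sorted by $y$-coordinate there for free.

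Within each strip I would extract a monotone staircase of candidate pivots by walking the sorted $y$-list and emitting a new pivot whenever the red or blue rank advances by $s$ positions. Each such pivot $(x,y)$ defines one red-blue cell with $\min(|R_i|,|B_i|)=O(s)=O(K/N)$, giving property A. To realise property B, each domination pair $(r,b)$ is charged to the unique pivot of the global staircase whose $(x,y)$ lies inside the witness rectangle $(r_x,b_x)\times(b_y,r_y)$, broken by a lexicographic rule among the qualifying pivots. Property C then follows from the rank-$s$ sampling inside each strip: every point is inserted into $O(1)$ cells per strip, and only $O(1)$ neighbouring strips contribute to any single point under the boundary-pivot scheme, so the totals add to $O(N)$.

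The main obstacle is property B: forcing each domination pair to be governed by exactly one cell while keeping the total cell size $O(N)$. A naive pivot placement either leaves some witness rectangles with no pivot (missed pairs) or with several (double counts), and a literal per-strip-pair expansion blows property C. I plan to resolve both issues by enforcing a strict monotone-staircase property across the whole pivot set (strictly increasing $x$, strictly decreasing $y$) together with the lexicographic tie-breaker, so that the chosen pivots act like vertices of a canonical upper envelope and their cells partition the domination pairs exactly. Once property B is settled, the I/O cost is just a constant number of streaming scans over $R$ and $B$ plus in-memory sorts of strip-sized chunks, which sum to $O(N/B)$. The failure case $K^{*}>K$ is handled by aborting the scan as soon as the running upper bound on $\sum_i \min(|R_i|,|B_i|)\cdot \max(|R_i|,|B_i|)$ exceeds a fixed constant times $K$, preserving the $O(N/B)$ budget in all cases.
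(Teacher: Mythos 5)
Your proposal has a genuine gap, and it is exactly the one you flag yourself: property~(B) is not actually established, and the mechanism you sketch cannot establish it. Partitioning the index range into contiguous strips of width $s=\lceil K/N\rceil$ does not respect the geometry of domination pairs: a pair $(r,b)$ can have $r$ in the first strip and $b$ in the last, and a red point with a very large value can be dominated by blue points spread over essentially all later strips. So the claim that ``only $O(1)$ neighbouring strips contribute to any single point'' is false, and once you allow the witness rectangle to reach across many strips you are forced to either miss pairs (violating~(B)) or replicate points across $\Theta(N/s)$ cells (violating~(C)). A single monotone staircase, however canonically chosen, cannot fix this: the hard pairs are those in which $b$ dominates many ($\gg K/N$) red points \emph{and} $r$ is dominated by many blue points, and no cell containing such a pair can simultaneously satisfy $\min\{|R_i|,|B_i|\}=O(K/N)$ and contribute $O(N)$ to the totals in one level. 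The paper's construction is built around precisely this obstruction: it uses a $\lceil 2K/N\rceil$-shallow cutting on the red points to peel off the pairs whose blue endpoint is ``shallow,'' a second, oppositely oriented shallow cutting on the remaining (deep) blue points to peel off the pairs whose red endpoint is shallow, and then \emph{recurses} on the residual deep-red/deep-blue sets, which are guaranteed to have size at most $N/2$ each; the geometric decrease is what makes property~(C) and the $O(N/B)$ I/O bound come out. Your proposal has no analogue of the recursion, so the deep--deep pairs are simply unaccounted for.

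Two further problems: your failure test (abort when a running bound on $\sum_i \min(|R_i|,|B_i|)\cdot\max(|R_i|,|B_i|)$ exceeds $cK$) does not certify $K^{*}>K$, since that sum bounds pairs \emph{inside} cells rather than actual domination pairs; the paper instead declares failure when more than $N/2$ points are deep, which certifies $K^{*}>K$ because each deep point participates in at least $\lceil 2K/N\rceil$ domination pairs. And your reliance on strip-sized in-memory sorts restricts the argument to $K\le NM$, whereas the lemma is invoked in the guessing rounds with $K$ as large as $\Theta(N^{2})$, so the construction must work for all $K$ using only $O(N/B)$ I/Os, as the shallow-cutting construction of Vengroff and Vitter does.
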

The rest of this section is dedicated to proving Lemma~\ref{lem:cells}.
\subsection{First step: Red cells}
{\em Shallow cuttings} for various geometric objects are widely used in computational geometry to answer range searching and related problems 
(for example, \cite{ac09,m92}). Shallow cuttings as described in this section have been used before by Vengroff and Vitter \cite{vv96}.
On the technical side, our key contribution is a novel and a non-trivial application of shallow cuttings.

Consider a red pointset  $R$. Informally, a $k$-shallow cutting on the  pointset $R$ has the form of a ``staircase'' which is a one-dimensional, monotone 
sequence of orthogonal line-segments. Formally, a $k$-shallow cutting is a curve ${\cal C}=c_1d_1c_2d_2\ldots d_{t-1}c_t$ of alternating 
horizontal line segments $c_id_i=[x(c_i),x(d_i)] \times [y(c_i)]$ and vertical line segments 
$d_ic_{i+1}=[x(d_i)] \times [y(c_{i+1}), y(d_i)]$. 
See Figure~\ref{fig:main-figure}(b). The points $c_1,c_2,\ldots,c_t$ are called {\it outward corners} and the points $d_1,d_2,\ldots,d_{t-1}$ are 
called {\it inward corners}. 
With each outward corner $c_i=(x,y)$, we associate a {\em cell} $C_i =(-\infty,x) \times (y,+\infty)$.
If a point $q$ is dominated by at least one outward corner, then $q$ is said to lie {\it above} the curve ${\cal C}$. 
On the other hand, if a point $q$ dominates at least one inward corner, then $q$ is said to lie {\it below} the curve ${\cal C}$. 
The curve ${\cal C}$ has the following properties:

\begin{enumerate}
 
 \item Every point on curve ${\cal C}$ dominates at least $k$ points in $R$, but it dominates no more than $2k$ points in $R$.
 
 \item If a point $q$ dominates less than  $k$ points of $R$, 
 then $q$ lies above the curve ${\cal C}$.
 
 \item  $t=O(n/k)$, i.e., the number of cells are no more than 
 $O(n/k)$.
\end{enumerate}

\begin{lemma}\label{lem:sc}
The $k$-shallow cutting on $R$ can be constructed using $O(N/B)$ I/Os.
The inward and the outward corners are reported in increasing order of 
their $x$-coordinate value.
\end{lemma}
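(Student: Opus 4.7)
Since $x(r_i)=i$ by construction, the input already arrives sorted in $x$, so I will build the whole cutting in one left-to-right sweep. Throughout the sweep I maintain a current threshold height $h$ (initially $-\infty$) and a working set $T$ of already-seen red points whose $y$-coordinate exceeds $h$. For each new $r_i$, append it to $T$ if $y(r_i)>h$. When $|T|$ reaches $2k$, perform a \emph{flush}: find the median $y$-coordinate $h'$ of $T$ by deterministic linear-time selection (executed in external memory when $k>M$), emit an inward corner $(i,h)$ followed by an outward corner $(i,h')$, discard the bottom $k$ entries of $T$, and reset $h \gets h'$. The emission order is naturally increasing in $x$. Small sentinel emissions at the very start (when $|T|$ first reaches $k$) and at end-of-sweep will handle the first and last horizontal segments.

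For the I/O count, the sequential scan of the input costs $O(N/B)$ and the emitted corners are written sequentially at a cost of $O(N/(kB))$. The interesting cost is the maintenance of $T$. If $2k\le M$, $T$ fits in main memory and that maintenance is free. If $2k > M$, $T$ is kept as an append-only external array, and each flush pays $O(k/B)$ I/Os: one linear-time external median on $T$, followed by a single streaming pass to keep the top half. Because $h$ is monotone non-decreasing, every point discarded by a flush is gone for good, so at most $N/k$ flushes ever occur; summing, $(N/k)\cdot O(k/B)=O(N/B)$.

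Correctness of the three shallow-cutting properties reads off directly from the flush invariants. Property 3 (at most $O(N/k)$ cells) is just the flush count. For property 1, an outward corner is emitted with a cell of exactly $k$ red points and its horizontal segment is extended rightward precisely until that cell would reach $2k$, so every point on the curve dominates between $k$ and $2k$ points. For property 2, if a point $q$ dominates fewer than $k$ red points, then at sweep position $x(q)$ the set $T$ contains fewer than $k$ entries above $y(q)$, which forces the outward corner produced by the next flush after position $x(q)$ to lie strictly right of and below $q$, placing $q$ in that cell.

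The step I expect to be the main obstacle is the $k>M$ regime: one must really use an external-memory linear-time selection routine on $T$ and rely on the monotonicity of $h$ to charge the $O(k/B)$ cost of each flush against the $k$ points it retires. Without that monotonicity a point could re-enter $T$ and the global $O(N/B)$ bound would collapse.
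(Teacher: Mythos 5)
The paper itself does not really prove this lemma---it just cites Vengroff and Vitter for the construction---so you are supplying an argument the paper omits. Your sweep framework, the monotone threshold $h$, and the $O(N/B)$ charging scheme (each flush costs $O(k/B)$ and permanently retires $k$ points, so there are at most $N/k$ flushes) are all sound. The genuine gap is in where you place the outward corners, and it breaks Property~2. You emit the outward corner at the \emph{new} threshold $h'$, so its cell $(-\infty,i)\times(h',\infty)$ contains only the $k$ retained points; but the union of the upper-left quadrants of such level-$k$ corners does not cover the whole region of points dominating fewer than $k$ red points. Concretely, take $k=2$ and red points with $y$-values $10,20,30,40,1,2,50,60$ at $x=1,\dots,8$. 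Your first flush occurs at $x=4$ with $h'\in[20,30)$ and the second at $x=8$ with $h'\in[40,50)$. The query $q=(5.5,35)$ dominates only one red point, yet it lies in no cell: every corner with $x$-coordinate exceeding $5.5$ sits at height at least $40$, above $q$. Your justification---that having fewer than $k$ entries of $T$ above $y(q)$ at sweep position $x(q)$ ``forces the outward corner produced by the next flush to lie below $q$''---is exactly where this fails: red points arriving between $x(q)$ and the next flush position can push the $(k{+}1)$-st largest $y$-value among the seen points past $y(q)$.

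The fix is to emit the outward corner at the \emph{old} threshold, i.e., at $(i,h)$, the level-$2k$ point at which the current cell fills up, and the inward corner at $(i,h')$. Each cell then holds $\Theta(k)$ rather than exactly $k$ red points, which is still $O(K/N)$ for the paper's purposes, and the covering argument becomes correct: the outward corner assigned to the slab $(i_j,i_{j+1})$ has height equal to the $(k{+}1)$-st largest $y$-value among the points seen up to the \emph{previous} flush, which is at most the $k$-level at every abscissa in that slab. With that one change (and the corresponding relabeling of which segments of the staircase are horizontal and which are vertical), the rest of your proof---including the external-memory selection and the monotonicity-based charging in the $k>M$ regime---goes through and yields the claimed $O(N/B)$ bound with corners reported in increasing $x$-order.
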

\begin{proof}
There exists a simple algorithm to construct the $k$-shallow cutting on $R$. 
The details of this construction can be found in \cite{vv96}.
\end{proof}

\noindent
{\em Algorithm.} Given $R$ and $B$, we construct the first set of cells, which we 
call {\em red cells}.
\begin{enumerate}
\item Construct a $\left\lceil\frac{2K}{N}\right\rceil$-shallow cutting ${\cal C}_r$ on $R$.
\item For each blue point $b\in B$ check if it lies on/below or above the curve ${\cal C}_r$. If $b$ lies on/below ${\cal C}_r$, 
then it is classified as {\em deep}. Otherwise, it is classified as {\em shallow} and $b$ is {\em assigned} to any 
arbitrary cell in the cutting containing it.
\item  Delete all the shallow blue points from the dataset. 
If the number of deep blue points are greater than $N/2$, then we report a {\em failure} and 
stop the algorithm concluding that $K^{*} > K$. Otherwise, we go to the next step.
\item For each cell $C_i$ we define  $R_i$ to be the set of red points which lie in that cell, 
and  $B_i$ to be the set of blue points assigned to that cell. 
\end{enumerate}

\noindent
{\em Analysis.}
Now we analyze the running time of the above algorithm. Using Lemma~\ref{lem:sc}, step~$1$  can be 
performed in $O(N/B)$ I/Os. Step~$2$ 
is performed as follows: for each blue point (say $b$) find the outward corner (say $c_i$) immediately to 
its right. $b$ is assigned to cell $C_i$ if it lies in the cell of $C_i$; otherwise $b$ lies on/below ${\cal C}_r$ and is classified as deep. 
The blue points can be assigned using $O(N/B)$ I/Os since the blue points and the outward corners are already sorted along $x$-axis.

Next we show that when $K^{*} \leq K$ then the algorithm does not report a failure. 
Each deep blue point dominates $\geq \left\lceil\frac{2K}{N}\right\rceil$ red points 
(by Property~$2$ of shallow cuttings). Since there are at most $K^{*}$ domination 
pairs, the number of deep blue points is $\leq \frac{K^{*}}{\lceil\frac{2K}{N}\rceil} \leq \frac{K}{\lceil\frac{2K}{N}\rceil} \leq N/2$.  Hence, the algorithm will not report failure when $K^{*} \leq K$.

Now we prove that none of the three properties of the red-blue cells have been violated.
For each cell $C_i$, the outward corner $c_i$ dominates  $O(K/N)$ red points. Therefore, 
$|R_i|=O(K/N)$ and hence, property (A) is not violated.  By step~$4$ of our algorithm, 
we ensure property (B) for every domination pair $(r,b)$ where $b$ is a shallow point.
The other domination pairs will be taken care of in the next steps. 

Since each point in $B$ is assigned to exactly one cell, $\sum_{i=1}^{\ell} |B_i|\leq N$.
By Property~$3$ of shallow cuttings, the number of cells constructed is $O(N^2/K)$, 
and by Property~$1$ of shallow cuttings, each cell contains $O(K/N)$ red points. 
Therefore, $\sum_{i=1}^{\ell} |R_i|=O(N^2/K) \times O(K/N)=O(N)$. Therefore, property (C) 
has not yet been violated.

\vspace{0.1 in}
\noindent
{\em Remark.} Note that a red point in $R$ can belong to many $R_i$'s, whereas a shallow blue point in $B$ 
will belong to exactly one $B_i$.

\subsection{Second step: Blue cells}
After the first step, all the domination pairs $(r,b)$ involving the shallow blue points have been taken care of. 
In the next two steps, we discuss how to build additional cells which will capture domination pairs involving the deep blue points.

We will use shallow cuttings again, but this time we will change the orientation of our cells.
A $k$-shallow cutting on the deep blue points is a curve ${\cal C}=c_1d_1c_2d_2\ldots d_{t-1}c_t$ of alternating 
vertical line segments $c_id_i=[x(c_i)] \times [y(d_i),y(c_i)]$ and horizontal line segments  \\ $d_ic_{i+1}=[x(c_{i+1}),x(d_i)] \times [y(d_i)]$. 
See Figure~\ref{fig:main-figure}(c). 
The points $c_1,c_2,\ldots,c_t$ are called {\it outward corners} and the points $d_1,d_2,\ldots,d_{t-1}$ are 
called {\it inward corners}. 
With each outward corner $c_i=(x,y)$, we associate a {\em blue cell} $C_i =(x,\infty) \times (-\infty,y)$.
If a point $q$ is dominated by at least one inward corner, then $q$ is said to lie {\it above} the curve ${\cal C}$. 
On the other hand, if a point $q$ dominates at least one outward corner, then $q$ is said to lie {\it below} the curve ${\cal C}$. 
The curve ${\cal C}$ should have the following properties:

\begin{enumerate}
 
 \item Every point on curve ${\cal C}$ is dominated by at least $k$ deep blue points, but it is dominated by no more than $2k$ deep blue points.
 
 \item  If a point $q$ is dominated by less than  $k$ deep blue points, 
 then $q$ lies below the curve ${\cal C}$.
 
 \item $t=O(n/k)$.
\end{enumerate}

\noindent
{\em Algorithm.} Let $B_D$ be the set of deep blue points. Given $R$ and $B_D$, the following steps are performed:
\begin{enumerate}
\item Construct $\lceil\frac{2K}{N}\rceil$-shallow cutting ${\cal C}_b$ on all the  deep blue points.
\item For each red point $r$ check if it lies on/above or below the curve ${\cal C}_b$. If $r$ lies on/above ${\cal C}_b$, 
then it is classified as {\em deep}. Otherwise, it is classified as {\em shallow} and $r$ is assigned to any 
arbitrary cell in the cutting containing it.
\item  If the number of deep red points are greater than $N/2$, then we report a {\em failure} and 
stop the algorithm concluding that $K^{*} > K$. Otherwise, we go to the next step.
\item For each cell $C_i$ we define $B_i$ to be the set of deep blue points which lie in that cell, 
and $R_i$ to be the set of red points assigned to that cell. 
\end{enumerate}

Following the analysis from the previous step, the number of I/Os performed 
in this step is also bounded by $O(N/B)$, and it can be shown that none of 
the properties of the red-blue cells are violated yet.
\subsection{Third step: Recursion}

In the second step, all the domination pairs $(r,b)$ such that $r$ is a shallow red point and  $b$ is a 
deep blue point will be taken care of. After the first two steps, now we are left with deep red points $R_D$ 
and deep blue points $B_D$.  We know that $|R_D| < N/2$ and $|B_D| < N/2$; else a failure would have 
been reported.

\vspace{0.1 in}
\noindent
{\em Algorithm.} Recurse on $R_D$ and $B_D$, and all occurrences of $N$ in the algorithm are replaced 
with $N/2$. The algorithm stops when the red and the blue set is smaller than a suitable constant $C$.

\vspace{0.1 in}
Let $T(N)$ denote the total number of I/Os performed by this algorithm. Then,

\[ T(N) \leq \left\{\begin{array}{ll}
		  O(1) & \mbox{if $n \leq C$};\\
                  O\left(\frac{N}{B}\right) + T(N/2) & \mbox{otherwise}.
                 \end{array}\right.\]
                 
Solving this recurrence we get $T(N) = O\left(\frac{N}{B}\right)$.   
By a similar recurrence, Property~(C) of red-blue cells is satisfied. 
It is easy to verify that Property~(A) and (B) are also satisfied.             
This finishes the proof of Lemma~\ref{lem:cells}.

\section{The Adaptive Algorithm} \label{sec:adaptive}

Now we are ready to prove Theorem~\ref{thm:main-1}.

\subsection{First step: A non-adaptive algorithm}

The first step in building our adaptive solution is the construction of a {\em non-adaptive} algorithm.
\begin{theorem}\label{thm:non-adaptive}
Consider a list $R$ of $N_r$ elements and a list $B$ of $N_b$ blue elements. Then there exists a non-adaptive algorithm for red-blue dominance counting problem which requires  
$O\left(\frac{N}{B}\log_{M/B}\left(\frac{\min\{N_r,N_b\}}{B}\right)\right)$ I/Os, where $N=N_r + N_b$.
\end{theorem}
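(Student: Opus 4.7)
The plan is to reduce the problem to one external-memory sort of the smaller color followed by a recursive $y$-strip distribution sweep that fans out by $\Theta(M/B)$ per level. Assume without loss of generality that $N_r \le N_b$ (otherwise, run the symmetric version on the blues, scanning in reverse $x$-order, since reflecting the pointset through the origin swaps the two colors while preserving the dominance relation). First I would sort $R$ by its $y$-coordinate using a standard $(M/B)$-way mergesort, paying $O((N_r/B)\log_{M/B}(N_r/B))$ I/Os, which already sits within the target bound because $N_r \le N$. Call this $y$-sorted list $R^y$; I would also keep the original $x$-ordered lists $R^x$ and $B^x$.

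\textbf{Recursive step.} The recursion takes a triple $(R^x, R^y, B^x)$ with $n_r = |R^x|$ and $n_b = |B^x|$. If $n_r \le M$, load all of $R^x$ into internal memory, build an in-memory dominance-count structure over the reds, and stream $B^x$ in $x$-order, interleaved with $R^x$, answering each blue's query in memory; this costs $O((n_r+n_b)/B)$ I/Os. Otherwise, use $R^y$ to pick $\Theta(M/B)$ evenly spaced $y$-splitters so that each strip contains $n_r/(M/B)$ red points, and keep the splitters and one red-counter per strip in memory. Then perform a single merge-scan of $R^x$ and $B^x$ in $x$-order: whenever a red $r$ is read, binary-search the splitters to find its strip, increment that strip's counter, and append $r$ to the per-strip output stream for $R^x$; whenever a blue $b$ in strip $j$ is read, add the sum of the counters for all strictly higher-$y$ strips to the running answer (these are precisely the reds that lie in a higher $y$-strip than $b$ and, because we scan in $x$-order, already have $x < x(b)$), and append $b$ to the per-strip output stream for $B^x$. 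In parallel, cut $R^y$ into $M/B$ contiguous chunks; each chunk is the $y$-sorted set of reds of its strip, obtained for free. Recurse on each strip with its own triple $(R^x_j, R^y_j, B^x_j)$.

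\textbf{Analysis and anticipated obstacle.} The cross-strip sweep accounts for every dominance pair whose endpoints lie in different $y$-strips; same-strip pairs are handed off to the recursion. Writing $T(n_r,n_b)$ for the I/O cost, I would establish
\[
T(n_r,n_b) \;\le\; \sum_{j=1}^{M/B} T\!\left(\frac{n_r}{M/B},\, n_b^{(j)}\right) + O\!\left(\frac{n_r+n_b}{B}\right),\qquad \sum_j n_b^{(j)} = n_b,
\]
with base case $T(n_r,n_b) = O((n_r+n_b)/B)$ once $n_r \le M$. A straightforward induction on the recursion depth $O(\log_{M/B}(N_r/M))$ gives $T(N_r,N_b) = O((N/B)\log_{M/B}(N_r/M))$, which combined with the one-time sort yields $O((N/B)\log_{M/B}(\min\{N_r,N_b\}/B))$. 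The subtlest point I anticipate is maintaining the reds in both the $x$- and the $y$-ordering at every recursion level without a per-level re-sort, since re-sorting each strip would re-introduce a $\log_{M/B}(N/B)$ factor and destroy the $\min$ dependence; the plan is to carry the global $R^y$ across the recursion and exploit the fact that each strip's $y$-chunk is a contiguous sub-range of its parent's $R^y$, while the $x$-ordering of each strip's reds drops out of the $x$-order scan at no extra asymptotic cost. A secondary care-point is tuning the branching factor to a small constant times $M/B$ so that the $\Theta(M/B)$ per-strip output buffers, the splitter array, and the counters coexist in $O(M)$ internal memory.
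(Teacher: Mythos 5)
Your proposal is correct and follows essentially the same route as the paper: a distribution sweep that partitions the smaller color by value into $\Theta(\mathrm{poly}(M/B))$ strips, counts cross-strip dominances during a synchronized $x$-order scan, and recurses within strips, giving depth $O(\log_{M/B}(\min\{N_r,N_b\}/B))$. The only differences are cosmetic: the paper uses fanout $\Theta(\sqrt{M/B})$ with splitters found in linear I/Os rather than your full pre-sort of $R$ by $y$ and fanout $\Theta(M/B)$, and both choices yield the same asymptotic bound.
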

\begin{proof}
We will only give a high-level description of this algorithm. 
Most of the details are fairly standard. Without loss of 
generality, assume that $N_r=\min\{N_r,N_b\}$. 
As in distribution sort, in $O(N_r/B)$ I/Os the  list $R$ is split into $\Theta\left(\sqrt{\frac{M}{B}}\right)$ lists
$R_1, R_2,\ldots,R_f$ of roughly equal size, such that 
for any $i< j$, any element in $R_i$ is smaller than any element in $R_j$.
The order of the elements in any $R_i$ is systematic with their order in $R$. 
An element is $B$ is defined to {\em belong} to a set $R_i$ if the value of the blue 
element lies between the value of the smallest and the largest element in $R_i$. 
By performing a synchronized scan of all the $R_i$'s, in $O(N_b/B)$ I/Os, for each element in $B$ 
(say it belongs to $R_i$) we can compute the number of red points in 
$\bigcup_{i+1}^f R_i$ it dominates. 
Finally, $\forall i\in [1,f]$, we recurse on $R_i$ and the set of blue points which 
belong to $R_i$. The number of levels of recursion will be $O(\log_{M/B}\frac{N_r}{B})$.
\end{proof}

\subsection{Second step: $K$-capped structure}

Now we will solve the {\em $K$-capped  red-blue dominance counting} problem: 
Given a set $R$ of $N$ red points, a set $B$ of $N$ blue points, and a value $K$, we need to 
compute $K^{*}$, but if $K^{*} > K$, then we are allowed to report {\em failure}. 
We will prove the following result.

\begin{theorem}\label{thm:k-capped}
$K$-capped red-blue dominance counting problem can be solved using $O\left(\frac{N}{B} + \frac{N}{B}\log_{M/B}(\frac{K}{NB})\right)$ I/Os. 
\end{theorem}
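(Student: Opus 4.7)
The plan is to combine Lemma~\ref{lem:cells} with Theorem~\ref{thm:non-adaptive}. First, I would invoke the red-blue cell construction of Lemma~\ref{lem:cells} with the given parameter $K$; this costs $O(N/B)$ I/Os and either returns cells $C_1,\ldots,C_\ell$ satisfying properties (A)--(C), or reports \emph{failure}. In the latter case I simply return \emph{failure}, since Lemma~\ref{lem:cells} only fails when $K^{*}>K$, which is exactly what the $K$-capped problem allows us to output.

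Assuming the construction succeeds, I would then run the non-adaptive dominance-counting algorithm of Theorem~\ref{thm:non-adaptive} independently on each pair $(R_i,B_i)$ and return the sum of the $\ell$ counts. Correctness is immediate from property (B): every domination pair $(r,b)$ of the original input lies in exactly one cell, hence is counted exactly once by the aggregate sum.

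For the I/O analysis, the cell construction contributes $O(N/B)$. The non-adaptive algorithm on $(R_i,B_i)$ uses
\[
O\!\left(\frac{|R_i|+|B_i|}{B}\,\log_{M/B}\!\frac{\min\{|R_i|,|B_i|\}}{B}\right)
\]
I/Os. Property (A) bounds $\min\{|R_i|,|B_i|\}=O(K/N)$, so every logarithmic factor is at most $O(\log_{M/B}(K/(NB)))$, and property (C) gives $\sum_i(|R_i|+|B_i|)=O(N)$. Pulling the common logarithm outside the sum and combining with the construction cost yields the claimed $O\!\left(N/B + (N/B)\log_{M/B}(K/(NB))\right)$ bound.

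The step I expect to be the most delicate is this final aggregation in the regime where $K/(NB)<1$, since then the logarithmic factor in the per-cell cost threatens to vanish while the number of cells $\ell$ may still be large. The intended reading of the bound is the usual $\max\{1,\log_{M/B}(\cdot)\}$, and the per-cell linear overhead of invoking Theorem~\ref{thm:non-adaptive} has to be absorbed into the leading $O(N/B)$ term using property (C), which already caps the total weight $\sum_i(|R_i|+|B_i|)$ across all cells.
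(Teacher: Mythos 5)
Your proposal is correct and follows essentially the same route as the paper: construct the red-blue cells of Lemma~\ref{lem:cells} with parameter $K$ (reporting failure exactly when the construction does), run the non-adaptive algorithm of Theorem~\ref{thm:non-adaptive} within each cell, and bound the total cost by pulling the common $\log_{M/B}(K/(NB))$ factor out of the sum via properties (A) and (C). Your added remarks on correctness via property (B) and on absorbing the per-cell overhead into the leading $O(N/B)$ term are sensible refinements of the same argument, not a different approach.
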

Now we prove Theorem~\ref{thm:k-capped}.

\vspace{0.1 in}
\noindent
{\em Algorithm.}
Using Lemma~\ref{lem:cells}, construct red-blue cells on $R$ and $B$ with parameter $K$. 
If Lemma~\ref{lem:cells} reports a failure, then we stop the algorithm. 
Otherwise, we obtain a set of cells $C_1,\ldots, C_{\ell}$. 
For each $i\in [1,\ell]$, based on $R_i$ and $B_i$ associated with $C_i$, 
we run the non-adaptive algorithm of Theorem~\ref{thm:non-adaptive}. 
Finally, add up the count obtained from all the cells.

\vspace{0.1 in}
\noindent
{\em Analysis.}
The number of $I/Os$ performed will be bounded by 
\begin{align*}
&\sum_{i=1}^{\ell} O\left( \left(\frac{|R_i|+ |B_i|}{B} \right)\log_{M/B}\left( \frac{\min\{N_r,N_b\}}{B}\right)  \right) \\
&\leq \left(\log_{M/B}\frac{K}{NB}\right) \sum_{i=1}^{\ell} O\left( \frac{|R_i|+ |B_i|}{B} \right) \quad 
\text{by property (A)}\\
&\leq O\left(\frac{N}{B}\log_{M/B}\left(\frac{K}{NB}\right)\right) \quad \text{by property (C)}
\end{align*}

\subsection{Third step}
Using a  trick from the computational geometry literature, the solution to the 
$K$-capped red-blue dominance counting problem (Theorem~\ref{thm:k-capped}) can be used 
to efficiently solve the red-blue dominance counting problem (Theorem~\ref{thm:main-1}).

 We use Chan's guessing trick from \cite{c96b}. The algorithm is executed as a series of rounds. In round~$i$ (starting from $i=1$), we construct the $K_i$-capped structure of Theorem~\ref{thm:k-capped} 
for 
\[K_i=(NB)\cdot \left(\frac{M}{B}\right)^{\cdot2^i}\] 

If the algorithm returns the value of $K^{*}$, then we are done and the algorithm terminates. Otherwise, we 
proceed to round~$i+1$. Let $j$ be the number of rounds performed before termination. If $j=1$ then the 
number of I/Os performed is $O(N/B)$. Otherwise, 
if $j >1$ then in round $j-1$ since we reported failure, 
$K^{*} > (NB)\cdot \left(\frac{M}{B}\right)^{\cdot2^{j-1}} 
\implies 2^j < 2\log_{M/B}\frac{K^*}{NB}$.
 The total number of I/Os performed in  all the $j$ rounds is bounded by 
 $\sum_{i=1}^{j}O\left(\frac{N}{B}\log_{M/B}(\frac{K_i}{NB})\right)
 =\sum_{i=1}^{j}O\left(\frac{N}{B}\cdot 2^{i}\right)
 =O\left(\frac{N}{B}\cdot 2^{j}\right) \\
 = O\left(\frac{N}{B}\log_{M/B}(\frac{K^{*}}{NB})\right)$.

\vspace{0.1 in}
\noindent
{\em Remark.} This algorithm can be modified to match the 
 state of the art adaptive algorithms for the comparison based model~\cite{m79} and the RAM model~\cite{e15}.
This involves replacing the non-adaptive I/O-model algorithm 
of Theorem~\ref{thm:non-adaptive} with the non-adaptive 
algorithm in the comparison based model which takes $O(N\log N)$ time and 
the non-adaptive algorithm in the RAM model~\cite{cp10}. 
 
\section{The Approximation Algorithm}\label{sec:approx}

In this section we will prove Theorem~\ref{thm:approx}.
Our solution is based on an interesting combination of 
random sampling and red-blue cells.
The number of domination pairs, $K^{*}$, can lie in the 
range $[0,N^2)$. We will split the solution into three 
different cases and handle each of them separately.

\subsection{When $K^{*}\in [0,N]$}
By setting $M$ and $B$ to be appropriate constants, the I/O-model solution of 
Theorem~\ref{thm:k-capped} maps to the RAM model. We obtain the following result.
\begin{lemma}\label{lem:ram}
$K$-capped red-blue dominance counting problem can be solved in $O\left(N+N\log_2(\frac{K}{N})\right)$ time in 
the RAM model. 
\end{lemma}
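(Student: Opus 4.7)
The plan is to derive the RAM-model bound as a direct specialization of the I/O-model bound of Theorem~\ref{thm:k-capped}. Any algorithm in the external-memory model with block size $B$ and cache size $M$ can be simulated in the standard RAM model in time $O(B \cdot (\text{I/Os}))$ as long as the high-level primitives the algorithm uses (scans, merges, and distribution into $\Theta(\sqrt{M/B})$ buckets) are realized in the obvious way. I would therefore set $B = 1$ and let $M$ be a small absolute constant with $M/B \ge 2$, and substitute these choices into the bound $O(N/B + (N/B)\log_{M/B}(K/(NB)))$, which collapses to exactly $O(N + N\log_2(K/N))$.

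To justify that this specialization is legitimate, I would audit the two building blocks used in Theorem~\ref{thm:k-capped}. First, the $O(N/B)$ cell construction of Lemma~\ref{lem:cells} becomes $O(N)$ RAM time: the shallow-cutting construction of \cite{vv96} reduces to a single left-to-right sweep, the blue-vs-cutting and red-vs-cutting classifications are synchronized sweeps of sorted sequences, and the recursion on at most half of the points satisfies $T(N) = O(N) + T(N/2) = O(N)$. Second, the non-adaptive algorithm of Theorem~\ref{thm:non-adaptive} partitions the smaller list into $\Theta(\sqrt{M/B}) = \Theta(1)$ sublists per level and recurses, so its recursion depth becomes $\Theta(\log_2(\min\{N_r, N_b\}))$; each level is realized by linear-time synchronized scans, giving a RAM time of $O(N \log_2(\min\{N_r,N_b\}))$.

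Combining the two ingredients exactly as in the proof of Theorem~\ref{thm:k-capped} -- build the cells once, then invoke the non-adaptive procedure on each pair $(R_i,B_i)$ -- and appealing to properties (A) and (C) of the red-blue cells, namely $\min\{|R_i|, |B_i|\} = O(K/N)$ and $\sum_i (|R_i|+|B_i|) = O(N)$, yields a total RAM time of $O(N) + \sum_i O((|R_i|+|B_i|)\log_2(K/N)) = O(N + N\log_2(K/N))$. The only thing that could go wrong is a hidden dependence on $M$ or $B$ in either the shallow-cutting construction of \cite{vv96} or in the distribution step; but since every sub-routine is a sorted scan or a constant fan-out distribution, this verification is routine rather than a genuine obstacle, and the argument reduces to the substitution already performed.
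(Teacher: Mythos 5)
Your proposal is correct and follows exactly the paper's approach: the paper's entire justification is the one-line remark that setting $M$ and $B$ to appropriate constants makes the I/O bound of Theorem~\ref{thm:k-capped} collapse to $O(N+N\log_2(K/N))$ in the RAM model. Your additional audit of the subroutines is more thorough than what the paper provides, but it is the same argument.
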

Using Lemma~\ref{lem:ram} with $K=N$, we can either obtain the exact number of inversions in $O(N)$ time, or 
it will report a failure which implies that $K^* > N$.

\subsection{When $K^* \in [N, N\sqrt{N}\log N]$}

\noindent
{\em Algorithm.} The following steps are performed:

\vspace{0.1 in}
\noindent
(1) Construct the red-blue cells for parameter $K = N\sqrt{N}\log N$ using Lemma~\ref{lem:cells}.
If a failure is reported, then we conclude that $K^{*} > N\sqrt{N}\log N$ and stop 
the algorithm. Otherwise, go to the next step.

\vspace{0.1 in}
\noindent
(2) Pick $N$ samples. Each sample is a pair $(r,b)$ 
such that if $b\in B_i$ then $r\in R_i$. Each sample is picked by the following 
three stage process:
\begin{enumerate}
\item Pick a set $B_i$. A set $B_i$ is sampled with probability $\frac{|R_i||B_i|}{\sum_{i=1}^{\ell} |R_i||B_i|}$.
\item Sample a point in $B_i$. Each point in $B_i$ is sampled with probability $\frac{1}{|B_i|}$.
\item Sample a point in $R_i$. Each point in $R_i$ is sampled with probability $\frac{1}{|R_i|}$.
\end{enumerate}

\noindent
(3) Let $X$ be the number of samples which are domination pairs. 
Then we report $X\cdot C\sqrt{N}\log N$ as the answer, 
where the constant $C$ is defined later.

\begin{lemma}
Consider a pair $(r,b)$ such that $r\in R_i$ and $b\in B_i$.
The probability of the pair $(r,b)$ being picked is $\frac{1}{\sum_{i=1}^{\ell} |R_i||B_i|}$, i.e., 
each pair is picked with equal probability.
\end{lemma}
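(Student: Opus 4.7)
The plan is to prove the lemma by a direct multiplication of the three stage probabilities, after first pinning down that the index $i$ is uniquely determined by $b$ (so that there is no ambiguity in what ``the pair $(r,b)$ being picked'' means).

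First, I would invoke the remark at the end of Section~3.1 (and the analogous observation after the second step), which states that every blue point belongs to exactly one set $B_i$. Consequently, for any fixed pair $(r,b)$ with $r \in R_i$ and $b \in B_i$, the index $i$ is uniquely determined by $b$. Thus the three-stage sampling procedure produces the pair $(r,b)$ in exactly one way: the stage that draws a cell must draw the particular index $i$ with $b \in B_i$.

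Next I would compute the probability as a product of the three stage probabilities, which are independent by construction of the sampling process. Stage~1 picks cell $C_i$ with probability $|R_i||B_i| / \sum_{j=1}^{\ell}|R_j||B_j|$; stage~2 picks $b$ uniformly from $B_i$, contributing $1/|B_i|$; stage~3 picks $r$ uniformly from $R_i$, contributing $1/|R_i|$. Multiplying gives
\[
\frac{|R_i||B_i|}{\sum_{j=1}^{\ell}|R_j||B_j|} \cdot \frac{1}{|B_i|} \cdot \frac{1}{|R_i|} \;=\; \frac{1}{\sum_{j=1}^{\ell}|R_j||B_j|},
\]
which is independent of $i$, $r$, and $b$, and hence the same for every such pair.

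There is no real obstacle here; the only subtlety is the uniqueness-of-$i$ observation, which guarantees that the product above is not undercounted by summing over multiple admissible indices. With that in place, the calculation is just the chain rule applied to the three sampling stages, and the cancellation of $|R_i||B_i|$ in the numerator against the subsequent $1/|R_i|$ and $1/|B_i|$ gives the claimed uniform probability.
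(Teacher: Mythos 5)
Your proof is correct and follows essentially the same route as the paper, which simply multiplies the three stage probabilities and cancels $|R_i||B_i|$. The only difference is that you explicitly justify the uniqueness of the admissible index $i$ via the remark that each blue point is assigned to exactly one cell, a point the paper leaves implicit.
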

\begin{proof}
The probability of the pair $(r,b)$ being picked is 
$\frac{|R_i||B_i|}{\sum_{i=1}^{\ell} |R_i||B_i|} \times \frac{1}{|B_i|} \times \frac{1}{|R_i|}=\frac{1}{\sum_{i=1}^{\ell} |R_i||B_i|}$
\end{proof}

\begin{lemma}
The sample space is $O(N\sqrt{N}\log N)$.
In other words, $\sum_{i=1}^{\ell} |R_i||B_i| =O(N\sqrt{N}\log N)$.
\end{lemma}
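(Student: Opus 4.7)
The plan is to bound $\sum_{i=1}^{\ell} |R_i||B_i|$ by factoring each term into its smaller side times its larger side, and then invoking the two structural properties of the red-blue cells established in Lemma~\ref{lem:cells} for the parameter $K = N\sqrt{N}\log N$.

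First I would write $|R_i||B_i| = \min\{|R_i|,|B_i|\}\cdot \max\{|R_i|,|B_i|\}$. Property (A) of the red-blue cells gives the pointwise bound $\min\{|R_i|,|B_i|\} = O(K/N)$, which in the present regime equals $O(\sqrt{N}\log N)$ and is uniform across all cells. So I can pull this factor out of the summation:
\[
\sum_{i=1}^{\ell} |R_i||B_i| \;=\; O\!\left(\tfrac{K}{N}\right) \cdot \sum_{i=1}^{\ell} \max\{|R_i|,|B_i|\}.
\]

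Next I would bound $\max\{|R_i|,|B_i|\} \le |R_i| + |B_i|$ and apply property (C), which asserts $\sum_i |R_i| = O(N)$ and $\sum_i |B_i| = O(N)$, giving $\sum_i \max\{|R_i|,|B_i|\} = O(N)$. Plugging back in yields
\[
\sum_{i=1}^{\ell} |R_i||B_i| \;=\; O\!\left(\tfrac{K}{N}\right)\cdot O(N) \;=\; O(K) \;=\; O(N\sqrt{N}\log N),
\]
which is exactly the claim. There is no real obstacle here; the only subtle point is that the uniform bound on the \emph{smaller} of the two sides (property A) is exactly what makes the product summable in terms of the linear sums guaranteed by property (C). If one tried instead to bound $|R_i||B_i|$ using only property (C), the Cauchy--Schwarz step would be much weaker, so the proof critically relies on combining (A) with (C) in the factorization above.
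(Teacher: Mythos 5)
Your proof is correct and follows essentially the same route as the paper's: both bound the smaller factor of each product by $O(K/N)$ via property (A) and then sum the larger factors via property (C); the paper splits the sum into the two cases $|R_i|=\min$ and $|B_i|=\min$, while you fold them together with $\max\{|R_i|,|B_i|\}\le |R_i|+|B_i|$, which is the same argument in slightly different packaging.
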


\begin{proof}
We split the summation $\sum_{i=1}^{\ell} |R_i||B_i|$ into two disjoint summations:
one in which $|R_i|=\min\{ |R_i|,|B_i|\}$, and other one in which 
$|B_i|=\min\{ |R_i|,|B_i|\}$. Consider the first summation:
\begin{align*}
\sum_{i=1}^{\ell} |R_i||B_i| &\leq O(K/N)\sum_{i=1}^{\ell} |B_i|\quad \text{by property (A)} \\
&\leq O(K/N) \cdot O(N) \quad \text{by property (C)}\\
  &= O(N\sqrt{N}\log N)   
\end{align*}
The same bound can be shown for the other summation as well.
\end{proof}

\begin{lemma}
For a fixed constant $c$, with high probability $1-1/N^c$, the estimate 
will lie in the range \\ $\left[ \left(1-\frac{\log N}{N^{1/4}}\right)K^{*},\left(1+ \frac{\log N}{N^{1/4}}\right)K^{*}\right]$.
\end{lemma}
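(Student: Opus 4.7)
The plan is to treat the $N$ samples as i.i.d.\ Bernoulli trials and apply a multiplicative Chernoff bound, then observe that since the reported estimate is a deterministic rescaling of $X$, its relative error equals the relative error of $X$.

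By the previous two lemmas, the three-stage sampling produces, independently, a uniformly random point from a sample space of size $S = \sum_{i=1}^{\ell} |R_i||B_i| = O(N\sqrt{N}\log N)$, and by property~(B) exactly $K^{*}$ of these sample-space points correspond to domination pairs (each domination pair is realized through exactly one cell). Hence each sample is a domination pair with probability $p = K^{*}/S$. Consequently $\mu \defeq \expt[X] = Np = NK^{*}/S$, and choosing $C \defeq S/(N^{3/2}\log N) = \Theta(1)$ makes $X\cdot C\sqrt{N}\log N = X \cdot S/N$ an unbiased estimator of $K^{*}$.

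Applying the multiplicative Chernoff bound to the sum of $N$ i.i.d.\ Bernoulli$(p)$ variables gives
\[\Pr\!\big[\,|X - \mu| \geq \delta\mu\,\big] \;\leq\; 2\exp\!\big(-\delta^2 \mu/3\big),\]
and the same bound transfers to the relative error of the estimate $X\cdot S/N$. Plugging in $\delta = (\log N)/N^{1/4}$, we have $\delta^2 = (\log^2 N)/\sqrt{N}$. Using the hypothesis $K^{*} \geq N$ of this case together with the upper bound $S = O(N\sqrt{N}\log N)$ yields $\mu = \Omega(\sqrt{N}/\log N)$, and hence $\delta^2 \mu = \Omega(\log N)$. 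The failure probability is therefore $\exp(-\Omega(\log N)) = N^{-\Omega(1)}$, which gives the claimed $1 - 1/N^c$ bound for an appropriate fixed constant $c > 0$.

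The tight spot of the argument is the lower end of the case, $K^{*} \approx N$. There $\mu$ is only $\Theta(\sqrt{N}/\log N)$, and it is precisely this value that forces both the error parameter $\delta = (\log N)/N^{1/4}$ and the sample size $N$: any substantially smaller $\delta$ or any sub-linear sample count would drop $\delta^2 \mu$ below $\Theta(\log N)$ and destroy the polynomial failure probability. For larger $K^{*}$ in the case the concentration is strictly stronger, so the boundary $K^{*} = N$ is what determines the final error bound and the constant~$c$. I expect this balance between sample size, relative error, and success probability to be the only delicate calculation; the rest is a routine Chernoff bound built on top of the uniformity and disjointness guarantees from Section~\ref{sec:red-blue}.
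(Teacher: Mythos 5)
Your proposal is correct and follows essentially the same route as the paper: model $X$ as a sum of $N$ i.i.d.\ Bernoulli indicators with success probability $K^{*}/\sum_i |R_i||B_i|$, compute $\expt[X]$, and apply a multiplicative Chernoff bound with $\varepsilon = (\log N)/N^{1/4}$, using $K^{*}\geq N$ and the $O(N\sqrt{N}\log N)$ bound on the sample space to get $\varepsilon^2\expt[X]=\Omega(\log N)$. Your version is in fact slightly more careful than the paper's in spelling out why property~(B) makes the estimator unbiased and in treating $\sum_i|R_i||B_i|$ as the exact (computable) normalizer rather than identifying it with its upper bound.
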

\begin{proof}
Recall that $X$ is the number of domination pairs picked in the $N$ samples. 
For $i \in [1,N]$, define $X_i=1$ if the $i$-th sample picked is a 
domination pair; otherwise $X_i=0$. Therefore, $X=\sum_{i=1}^N X_i$. 
The expected value of $X$, i.e., $E[X]$ will be equal to \\ $N\cdot\frac{K^{*}}{\sum_{i=1}^{\ell} |R_i||B_i|}=N\cdot\frac{K^{*}}{CN\sqrt{N}\log N}=\frac{K^{*}}{C\sqrt{N}\log N}$, where $C$  is the constant inside $O(N\sqrt{N}\log N)$.

To apply Chernoff bounds, we need to perform the following set of calculations. 
Set a parameter $\varepsilon = \frac{\log N}{N^{1/4}}$ and use the fact that $K^{*} \geq N$, to observe that 
\begin{align*}
\varepsilon^2 E[X] = \varepsilon^2 \frac{K^{*}}{C\sqrt{N}\log N} > \varepsilon^2 \frac{\sqrt{N}}{C\log N} =
\frac{\log N}{C} \\
\end{align*}
By applying Chernoff bounds, we get 
\begin{align*}
\textbf{Pr}\bigg[ \bigg|X-E[X]\bigg|>\varepsilon E[X]\bigg] &< e^{-\Omega(\varepsilon^2 E[X])} < e^{-\Omega(\log N)} < N^{-c}
\end{align*}
\end{proof}

\subsection{When $K^{*} \in [N\sqrt{N}\log N, N^2]$}

\noindent
{\em Algorithm.} The following steps are performed: 

\vspace{0.1 in}
\noindent
(1) Pick $N$ random samples. Each sample is of the form $(r,b)$ where $r\in R$ and $b\in B$. 
Each red point in $R$ is picked with probability $\frac{1}{N}$ and each blue point in $B$ 
is picked with probability $\frac{1}{N}$. 

\vspace{0.1 in}
\noindent
(2) Let $X$ be the number of samples which are domination pairs. 
Then we report $X\cdot N$ as the answer.

\begin{lemma}
Let $c$ be a sufficiently large constant. Then with high probability $1-1/N^c$, the estimate 
will lie in the range $\left[ \left(1-\frac{1}{N^{1/4}}\right)K^{*},\left(1+ \frac{1}{N^{1/4}}\right)K^{*}\right]$.
\end{lemma}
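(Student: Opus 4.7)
The plan is to mirror the Chernoff-bounds argument used in the previous case, but with simpler sampling analysis because the sampling now uses the uniform distribution over $R \times B$ rather than the cell-weighted distribution. I would start by defining indicator random variables $X_i \in \{0,1\}$ for $i \in [1,N]$, where $X_i = 1$ iff the $i$-th sampled pair $(r,b)$ is a domination pair. Since each red point is picked uniformly with probability $1/N$ and each blue point independently with probability $1/N$, any specific pair is sampled with probability $1/N^2$, so $\Pr[X_i = 1] = K^{*}/N^2$ and $X = \sum_i X_i$ satisfies $E[X] = K^{*}/N$. Consequently the reported value $X \cdot N$ has expectation exactly $K^{*}$, so it suffices to establish concentration of $X$ around $E[X]$ within a multiplicative factor $\varepsilon = 1/N^{1/4}$.

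Next I would exploit the hypothesis $K^{*} \geq N\sqrt{N}\log N$ to lower-bound $E[X] \geq \sqrt{N}\log N$, from which
\[
\varepsilon^2 \, E[X] \;\geq\; \frac{1}{\sqrt{N}} \cdot \sqrt{N}\log N \;=\; \log N.
\]
Since the $X_i$ are independent $\{0,1\}$ random variables (the pairs are sampled independently), standard multiplicative Chernoff bounds yield
\[
\Pr\bigl[\,|X - E[X]| > \varepsilon E[X]\,\bigr] \;<\; 2\, e^{-\Omega(\varepsilon^2 E[X])} \;<\; 2\,e^{-\Omega(\log N)} \;\leq\; 1/N^c
\]
for any desired constant $c$, once the constant inside the $\Omega(\cdot)$ is made large enough (which in turn is controlled by choosing the hidden constant in the sample size, if needed). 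Multiplying the concentration inequality through by $N$ gives $X \cdot N \in [(1-\varepsilon)K^{*},(1+\varepsilon)K^{*}]$ with probability at least $1 - 1/N^c$, as claimed.

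There is essentially no hard step here; the argument is a direct specialization of the Chernoff calculation already used in the previous subsection. The only point to be mildly careful about is that in this regime we do not use the red-blue cell machinery at all, so the sample space is simply the $N^2$ ordered pairs in $R \times B$ and independence of the $X_i$'s is immediate from independence of the $N$ draws. The slightly better $\varepsilon = 1/N^{1/4}$ (versus $\log N/N^{1/4}$ in the earlier lemma) is exactly what the larger lower bound $K^{*} \geq N\sqrt{N}\log N$ buys us when we compute $\varepsilon^2 E[X]$.
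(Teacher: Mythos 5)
Your proof is correct and follows essentially the same route as the paper: the same indicator variables $X_i$, the same computation $E[X]=K^{*}/N$, the same use of $K^{*}\ge N\sqrt{N}\log N$ to get $\varepsilon^2 E[X]\ge \log N$ with $\varepsilon=1/N^{1/4}$, and the same Chernoff-bound conclusion. Your added remarks on independence and on why the cell machinery is unnecessary here are consistent with, and slightly more explicit than, the paper's own argument.
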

\begin{proof}
Let $X$ be the number of domination pairs picked in the $N$ samples. 
For $i \in [1,N]$, define $X_i=1$ if the $i$-th sample picked is a 
domination pair; otherwise $X_i=0$. Therefore, $X=\sum_{i=1}^N X_i$. 
Now, $E[X]=N\cdot\frac{K^{*}}{N^2}=\frac{K^{*}}{N}$.

To apply Chernoff bounds, we need to perform the following set of calculations. 
Set a parameter $\varepsilon = 1/N^{1/4}$ and use the fact that $K^{*} \geq N\sqrt{N}\log N$, to observe that 
\begin{align*}
\varepsilon^2 E[X] = \varepsilon^2 \frac{K^{*}}{N} > \varepsilon^2 \frac{N\sqrt{N}\log N}{N} =\log N \\
\end{align*}
By applying Chernoff bounds, we get 
\begin{align*}
Pr\bigg[ \bigg|X-E[X]\bigg|>\varepsilon E[X]\bigg] &< e^{-\Omega(\varepsilon^2 E[X])} < e^{-\Omega(\log N)} < N^{-c}
\end{align*}
\end{proof}

\bibliographystyle{plain}
\bibliography{./ref}

\section*{Appendix: I/O-model}
In this model~\cite{av88}, a machine is equipped with $M$ words of main memory, and a disk that has been formatted into {\em blocks} of $B$ words each. The values of $M$ and $B$ satisfy $M \ge 2B$. An I/O either reads a disk block into memory, or writes $B$ words of memory into a disk block. The {\em time} of an algorithm is measured in the number of I/Os performed, while the {\em space} is measured in the number of disk blocks occupied.

\end{document}